\definecolor{Gray}{gray}{0.9}
\newtheorem{definition}{Definition}
\newtheorem{prop}{Proposition}
\newtheorem{proof}{Proof}
\date{}
\begin{document}

\title{Procrustes-based distances for exploring between-matrices similarity} 

\author[1*]{Angela Andreella}
\author[2]{Riccardo De Santis}
\author[3]{Anna Vesely}
\author[3]{Livio Finos}
\affil[1]{Department of Economics, University Ca' Foscari Venezia, Italy}
\affil[2]{Department of Statistics, University of Padova, Italy}
\affil[3]{Department of Developmental Psychology and Socialization, University of Padova, Italy}
\affil[*]{correspondig author: angela.andreella@unive.it}

\maketitle

\begin{abstract}

The statistical shape analysis called \emph{Procrustes} analysis minimizes the  distance between matrices by similarity transformations. The method returns a set of optimal orthogonal matrices, which project each matrix into a common space. This manuscript presents two types of distances derived from \emph{Procrustes} analysis for exploring between-matrices similarity. 
The first one focuses on the residuals from the \emph{Procrustes} analysis, i.e., the \emph{residual-based} distance metric. In contrast, the second one exploits the fitted orthogonal matrices, i.e., the \emph{rotational-based} distance metric. Thanks to these distances, similarity-based techniques such as the multidimensional scaling method can be applied to visualize and explore patterns and similarities among observations. The proposed distances result in being helpful in functional magnetic resonance imaging (fMRI) data analysis. 
The brain activation measured over space and time can be represented by a matrix. The proposed distances applied to a sample of subjects -- i.e., matrices -- revealed groups of individuals sharing patterns of neural brain activation.

{\bf Keywords:} Procrustes method, ProMises model, Orthogonal transformation, similarity, fMRI group analysis, fMRI data
\end{abstract}

\section{Introduction}
Applications in several fields, such as ecology \citep{saito2015should}, biology \citep{rohlf1990extensions}, analytical chemometrics \citep{andrade2004procrustes}, psychometrics \citep{green1952orthogonal,mccrae1996evaluating}, and neuroscience \citep{Haxby} need to compare information described by matrices expressed in an arbitrary coordinate system. The dimension of the matrices corresponding to this arbitrary coordinate system results to be a functional misalignment. In this context, the statistical shape analysis \citep{dryden2016statistical} called \emph{Procrustes} analysis \citep{Gower} can be helpful. Briefly, the \emph{Procrustes} analysis aligns the matrices into a common reference space by similarity transformations (i.e., rotation, reflection, translation, and scaling transformations). The optimal similarity transformations are those that minimize the squared  distance between the matrices.

Several \emph{Procrustes}-based functional alignment approaches can be found in the literature; two of the most used ones are the orthogonal \emph{Procrustes} problem (OPP) \citep{Jos} and the generalized \emph{Procrustes} analysis (GPA) \citep{gower1975generalized}. The first deals with the alignment of two matrices, while the second finds optimal similarity transformations when more than two matrices are analyzed. OPP has a closed-form solution, while GPA is based on an iterative algorithm proposed by \cite{gower1975generalized}. Since the \emph{Procrustes} problem can be seen as a least squares problem, \cite{Goodall} translated it into a statistical model, i.e., the perturbation model, where the error terms follow a matrix normal distribution \citep{gupta2018matrix}. 

Neuroscience is one of the fields where \emph{Procrustes}-based methods are most widely used. In particular, functional magnetic resonance imaging (fMRI) is the most widely used technique for studying the neural underpinnings of human cognition. Brain activation is expressed as the correlation between the sequence of cognitive stimuli and the sequence of measured blood oxygenation levels (BOLD). In response to neural activity, changes in brain hemodynamics affect the local intensity of the magnetic resonance signal, that is, the voxel intensity (single-volume elements). However, various criticalities arise when analysis (e.g., classification analysis, inference analysis) between subjects is performed. The anatomical and functional structures of the brain greatly vary between subjects, even if time-synchronized stimuli are proposed to the participants \citep{Watson, Tootell, Hasson}. For that, the alignment step is an essential part of the preprocessing procedure in fMRI group-level analysis. Anatomical normalization (e.g., \cite{talairach1988co, jenkinson2002improved, Fischl}) fixes the anatomical misalignment through affine transformations, where brain images are aligned to a standard anatomical template (e.g., Talairach template \citep{talairach1988co}, Montreal Imaging Institute (MNI) template \citep{collins1994automatic}). However, the anatomical alignment does not capture the functional variability between subjects, which is a well-known problem in the neuroscience literature \citep{Watson, Hasson, Tootell}. 

The brain activation of one subject can be described by a matrix where the rows represent the time points/stimuli and the columns the voxels. Therefore, each row shows the response activation to one stimulus across all voxels, and each column expresses the time series of activation for each voxel. The functional misalignment can be focused on the columns between matrices, i.e., the time series of activations are not in correspondence between subjects, while the response activations are since the stimuli are generally time-synchronized \citep{Haxby, andreella2022procrustes}.
In the context of fMRI data, one of the most popular \emph{Procrustes}-based functional alignment methods is the \emph{hyperalignment} technique proposed by \cite{Haxby}, which is a sequential approach to OPP. However, both OPP and GPA and \emph{hyperalignment} suffer from low interpretability of aligned matrices (i.e., fMRI images) and related results (e.g., statistical t-tests, classifier coefficients) as well as in-applicability in high-dimensional data. In particular, in fMRI data analysis, the first problem leads to losing the anatomical interpretation of the final aligned images, and the second one makes it impossible to apply the alignment method to the whole brain. The low interpretability is caused by the ill-posed structure of the \emph{Procrustes}-based approaches: they do not return a unique solution for the optimal orthogonal transformation. For further details about the functional alignment problem in the fMRI data analysis framework, please see \cite{andreella2022enhanced}.

For that, \cite{andreella2022procrustes} proposed an extension, i.e., the ProMises model, of the perturbation model developed by \cite{Goodall}. In particular, the perturbation model rephrases the \emph{Procrustes} problem as a statistical model. The extension of \cite{andreella2022procrustes} is focused on inserting a penalization in the orthogonal matrix's estimation process, specifying a proper prior distribution for the orthogonal matrix parameter. The von Mises-Fisher distribution \citep{downs1972orientation} is used to insert prior information about the final structure of the common space. Thanks to that, the no-uniqueness problem of the \emph{Procrustes}-based methods is solved, getting an interpretable estimator for the orthogonal matrix transformations. This permits to have unique aligned matrices as well as related statistical inference results. The alignment process does not affect the type I error since the ProMises model can be seen as a procedure that sorts the null hypotheses based on a priori information \citep{andreella2022valid}.
The computation of the maximum a posteriori estimate is straightforward; in fact, the von Mises-Fisher distribution is a conjugate prior to the matrix normal distribution \citep{gupta2018matrix}, which is the distribution of the error terms in the ProMises and perturbation models.

In this work, we present a method that exploits the information coming from the functional misalignment resulting from \emph{Procrustes}-based methods (e.g., GPA, \emph{hyperlalignment} and ProMises model). We propose here two distance metrics \citep{deza2006dictionary} that capture different perspectives of similarity/dissimilarity between matrices, e.g., subjects in the fMRI cases. The minimization problem solved by \emph{Procrustes}'s methods can also be defined as distance among objects \citep{dryden2016statistical}. The first distance metric presented here is based on the residuals coming from the solution of a Procrustes problem. The \emph{residual-based} distance expresses then how the matrices/subjects are different/similar after functional alignment. In this case, the distance metric captures the dissimilarity/similarity in terms of noise since the matrices have the same orientations after functional alignment. Instead, the second distance exploits the orthogonal matrix parameters solution of the \emph{Procrustes} problem. The \emph{rotational-based} distance computes the squared  distance between these estimated orthogonal matrices. As we will see, this metric measures the level of dissimilarity/similarity in orientation between matrices/subjects before functional alignment.

In the paper, we show how these metrics can be used in distance-based techniques such as the multidimensional scaling method \citep{carroll1998multidimensional}, hierarchical clustering \citep{murtagh2012algorithms} and t-distributed stochastic neighbor embedding (t-SNE) \citep{van2008visualizing} in order to visualize and quantify patterns and shared characteristics between matrices (i.e., individuals described by multiple dimensions). 

The paper is organized as follows. Section
\ref{promisesmodel} introduces the \emph{Procrustes}-based methods. The core of the manuscript is contained in Section \ref{rotations}, where the \emph{residual-based} and \emph{rotation-based} distances are proposed. Finally, we explain how to use the distances between rotations and residuals as a tool to understand the underlying clusters between subjects in the fMRI data analysis framework in Section \ref{application}. The analyses of this manuscript are performed using the \texttt{R} package \texttt{alignProMises} available at \url{https://github.com/angeella/alignProMises} for the functional alignment part, and using the \texttt{R} package \texttt{rotoDistance} available at \url{https://github.com/angeella/rotoDistance} for the computation of the \emph{rotational-based} and \emph{residual-based} distances.

\section{Procrustes analysis}\label{promisesmodel}

Let $\{\boldsymbol{X}_i \in \mathbb{R}^{n \times m}\}_{i = 1,\dots, N}$ be a set of matrices to be aligned. The Procrustes analysis uses similarity transformations \citep{gower1975generalized}, i.e., scaling, rotation/ reflection, and translation, to map $\{\boldsymbol{X}_i \in \mathbb{R}^{n \times m}\}_{i = 1,\dots, N}$ into a common reference space.

If only two matrices are analyzed, i.e., $N=2$, we can consider one of the two matrices as a common reference matrix. The orthogonal \emph{Procrustes} problem (OPP) is then applied and defined as:
\begin{equation}\label{eq:opp}
	\min_{\boldsymbol{R}_i, \alpha_i, t_i} || \alpha (X_i - 1_n t) R - X_j||_{F}^2 \quad \text{subject to } R_i \in \mathcal{O}(m)
\end{equation}
where $\mathcal{O}(m)$ is the orthogonal group in dimension $m$, $|| \cdot||_{F}$ is the Frobenius norm, $\alpha \in \mathbb{R}^{+}$ is the isotropic scaling, $t \in \mathbb{R}^{1 \times m}$
defines the translation vector, and $1_n \in \mathbb{R}^{1 \times n}$ is a vector of ones.

The optimal translation results to be the column-centering, while $\boldsymbol{R}$ and $\alpha$ equal
\begin{equation}\label{eq:svd}
	\hat{R} = \boldsymbol{U} \boldsymbol{V}^\top; \quad \hat{\alpha}_{\hat{R}} = \frac{\text{tr}(D)}{||  \hat{R}^\top X_i^\top||^2_{\text{F}}}
\end{equation} 
where $U D V^\top$ is the singular value decomposition of $\boldsymbol{X}_i^\top  \boldsymbol{X}_j$. 

If more than two matrices are analyzed, i.e., $N>2$, the generalized \emph{Procrustes} analysis (GPA) must be applied. In this case, the set of matrices $\{\boldsymbol{X}_i \in \mathbb{R}^{n \times m}\}_{i = 1,\dots, N}$ are mapped by similarity transformations into a common reference matrix $\boldsymbol{M} \in \mathbb{R}^{n \times m}$. This common reference matrix can be defined in several ways, e.g., element-wise arithmetic mean. The GPA is defined as

\begin{equation}\label{eq:gpa}
	\min_{\boldsymbol{R}_i, \alpha_i, t_i} \sum_{i=1}^{N} || \alpha_i (X_i - 1_n t_i) R_i - M||_{F}^2 \quad \text{subject to } R_i \in \mathcal{O}(m).
\end{equation} 

Unlike OPP, GPA does not have a closed-form solution for $R_i$ and $\alpha_i$, and an iterative algorithm must be used where at each step, the reference matrix is updated \citep{gower1975generalized}.

Another approach is the perturbation model proposed by \cite{Goodall}, where the least squares problem defined in Equation \ref{eq:gpa} is translated as a statistical model assuming that $\{\boldsymbol{X}_i\}_{i = 1,\dots, N}$ are noisy rotations of a common space $\boldsymbol{M}$.

The perturbation model is then defined as follows:
\begin{equation}\label{promises}
	\boldsymbol{X}_i = \alpha_i(\boldsymbol{M} + \boldsymbol{E}_i) \boldsymbol{R}_i^\top +\boldsymbol{1}^\top_n t_i
\end{equation} 
where $\boldsymbol{E}_i$ is the random error matrix following a normal matrix distribution \citep{gupta2018matrix} $\boldsymbol{E}_i \sim \mathcal{MN}_{nm}(0, \boldsymbol{\Sigma}_n, \boldsymbol{\Sigma}_m)$, with $\boldsymbol{\Sigma}_n \in \mathbb{R}^{n \times n}$ and $\boldsymbol{\Sigma}_m \in \mathbb{R}^{m \times m}$.  The similarity transformations are represented by the following parameters $\boldsymbol{R}_i$, $\alpha_i$, and $t_i$ that must be estimated for each $i = 1, \dots, N$. The optimal similarity transformations $\hat{R}_i$ and $\hat{\alpha_i}_{\hat{R}_i}$ are slight modifications of the ones found by OPP and GPA:

\begin{equation}\label{eq:svd2}
	\hat{R}_i\ = \{\boldsymbol{U}_i \boldsymbol{V}_i^\top\}_{i = 1,\dots, N}; \quad \hat{\alpha_i}_{\hat{R}_i} = \frac{\text{tr}(D_i)}{|| \Sigma_m^{-1/2} \hat{R}_i^\top X_i^\top \Sigma_n^{-1/2}||^2_{\text{F}}} \quad \forall i \in \{1,\dots, N\}
\end{equation} 
where $U_i D_i V_i^\top$ is the singular value decomposition of $\boldsymbol{X}_i^\top \Sigma_n^{-1}  \boldsymbol{X}_j \Sigma_m^{-1}$. 

The extension of the perturbation model is proposed by \cite{andreella2022procrustes}, where the orthogonal matrix parameter $\boldsymbol{R}_i$ follows a von Mises-Fisher distribution \citep{downs1972orientation}:

\begin{equation*}
	f(\boldsymbol{R}_i) \sim C(\boldsymbol{F},k) \exp(k \boldsymbol{F} \boldsymbol{R}_i)
\end{equation*}
where $\boldsymbol{F} \in \mathbb{R}^{m \times m}$ is the location matrix parameter, $k \in \mathbb{R}^{+}$ represents the regularization parameter and $C(\boldsymbol{F},k)$ is the normalizing constant. \cite{andreella2022procrustes} found that the maximum a posteriori estimates are slight modifications of the perturbation model proposed by \cite{Goodall} (i.e., without imposing the von Mises-Fisher prior distribution for $\boldsymbol{R}_i$). The estimators for the sets of parameters $\{\boldsymbol{R}_i\}_{i = 1,\dots, N}$ and $\{\boldsymbol{\alpha}_i\}_{i = 1,\dots, N}$ are essentially the same but decomposing $\boldsymbol{X}_i^\top \Sigma_n^{-1} \boldsymbol{M} \Sigma_m^{-1} + k \boldsymbol{F}$ instead of $\boldsymbol{X}_i^\top \Sigma_n^{-1} \boldsymbol{M} \Sigma_m^{-1}$. The straightforward solutions are due to the conjugacy of the von Mises–Fisher distribution to the matrix normal distribution \citep{green2006bayesian, andreella2022procrustes}. Therefore, the prior information enters directly into the singular value decomposition step of the estimation process. 

The motivation to impose an a priori distribution to the orthogonal matrix parameter $\boldsymbol{R}_i$ stems from the assumption that ``the anatomical alignment is not so far from the truth". The information of the three-dimensional spatial coordinates of the voxels is then inserted into the estimation process thanks to a proper definition of the prior location parameter $\boldsymbol{F} \in \mathbb{R}^{m \times m}$. \cite{andreella2022procrustes} define $\boldsymbol{F}$ as a similarity Euclidean distance. In this way, the rotation loadings that combine closer voxels are higher than the ones that combine voxels that are far apart. In addition, defining $\boldsymbol{F}$ as a similarity Euclidean matrix leads to $\boldsymbol{X}_i^\top \Sigma_n^{-1} \boldsymbol{M} \Sigma_m^{-1} + k \boldsymbol{F}$ having full rank, i.e., unique solution for $\boldsymbol{R}_i$.  

Finally, \cite{andreella2022procrustes} proposed an efficient version of the ProMises model in the case of high-dimensional data. The problem when $m >> n$ arises since the ProMises model, and also the perturbation model, must compute $N$ singular value decompositions of matrices with dimensions $m \times m$. \cite{andreella2022procrustes} use specific semi-orthogonal transformations to project the matrices $\boldsymbol{X}_i \in \mathbb{R}^{n \times m}$ into the lower dimensional space $\mathbb{R}^{n \times n}$. In particular, if we consider as $m\times n$ semi-orthogonal transformation $\boldsymbol{Q}_i$ the ones coming from the thin singular value decomposition \citep{bai2000templates} of $\boldsymbol{X}_i$ we reach the same fit of data but reducing the time complexity from $\mathcal{O}(m^3)$ to $\mathcal{O}(m n^2)$, and the space complexity from $\mathcal{O}(m^2)$ to $\mathcal{O}(mn)$.

Briefly, the Efficient ProMises applies the semi-orthogonal transformation $\boldsymbol{Q}_i$ to $\boldsymbol{X}_i$ and then applies the ProMises model on the set of lower dimensional matrices $\{\boldsymbol{X}_i \boldsymbol{Q}_i \in \mathbb{R}^{n \times n}\}$. The efficient ProMises model allows the alignment of high-dimensional data such as fMRI data where the dimension $m$ (i.e., the number of voxels) equals approximately $200,000$.

For further details about the ProMises model and its Efficient version, please see \cite{andreella2022procrustes}. 

\section{Procrustes-based distances}\label{rotations}

\emph{Procrustes}-based methods (i.e., OPP, GPA, perturbation model, or ProMises model) find the orthogonal matrices that, applied to the original matrices, minimize the Frobenius distance among resulting matrices. It is, therefore, natural to define a distance that is based on this quantity: the squared residuals among aligned matrices. In this case, we measure how different two matrices are beyond rotation. Two matrices can look very different, while they may result to be very similar after rotation. \emph{Residual-based} distance succeeds in capturing this aspect, thus evaluating only the distance between matrices net of rotations.

The second kind of distance that we will define is based on the \emph{rotational effort} that is taken to align one matrix $\boldsymbol{X}_i$ to another matrix $\boldsymbol{X}_j$. This effort is measured as the distance between the orthogonal matrix that solves the Procrustes problem  $\hat{\boldsymbol{R}}_i$ and $\boldsymbol{I}_m$ (i.e., the matrix that does not operate any rotation): the larger the distance between the $\hat{\boldsymbol{R}}_i$ and $\boldsymbol{I}_m$, the bigger the effort to align $\boldsymbol{X}_i$ to $\boldsymbol{X}_j$.

In the following, we give the formal definitions of \emph{residual-based} and \emph{rotational-based} distances:

\begin{definition}\label{distance2}
	Consider a set of matrices $\{\boldsymbol{\hat{X}}_i \in \mathbb{R}^{n \times m}\}_{i=1, \dots, N}$ functionally aligned by some \emph{Procrustes}-based method presented in Section \ref{promisesmodel}, i.e.,
	\begin{equation*}
		\boldsymbol{\hat{X}}_i = \hat{\alpha_i}_{\boldsymbol{\hat{R}_i}}\boldsymbol{X}_i \boldsymbol{\hat{R}_i}.
	\end{equation*}
	The \emph{residual-based} distance is defined as:
	\begin{equation}\label{eq:res}
		d_{Re}(\boldsymbol{\hat{X}}_i, \boldsymbol{\hat{X}}_j) = || \boldsymbol{\hat{X}}_i  - \boldsymbol{\hat{X}}_j||_{F}^2.
	\end{equation}
\end{definition}

We can note that the \emph{residual-based} distance defined in Equation \ref{eq:res} is directly related to the GPA defined in Equation \ref{eq:gpa}. If we consider two matrices, the distance is simply the pair's contribution within the GPA minimization problem, precisely the optimization's residuals. 

\begin{definition}\label{distance}
	Consider a set of orthogonal matrices $\{\hat{\boldsymbol{R}}_i \in \mathcal{O}(m)\}_{i=1, \dots, N}$ estimated by some \emph{Procrustes}-based method presented in Section \ref{promisesmodel}. The  \emph{rotational-based} distance is defined as:
	\begin{equation}
		d_{Ro}(\hat{\boldsymbol{R}}_i, \hat{\boldsymbol{R}}_j) = || \hat{\boldsymbol{R}}_i - \hat{\boldsymbol{R}}_j||_{F}^2.
	\end{equation}
\end{definition}

Since both distances are based on the matrix Frobenius norm, this implies that $d_{Re} (\cdot)$ and $d_{Ro} (\cdot)$ can be considered directly as a valid metric, i.e., distance functions $d_{Re}: \mathbb{R}^{n \times m} \times \mathbb{R}^{n \times m} \rightarrow \mathbb{R}^{\ge 0}$ and $d_{Ro}: \mathcal{O}(m)\times \mathcal{O}(m) \rightarrow \mathbb{R}^{\ge 0}$. 





Therefore, if $d_{Re}=0$, the two matrices are functionally similar without considering the orientation characteristics. In the same way, as $d_{Re}$ increases, dissimilarity in functional terms increases without considering orientation again. Instead, if $d_{Ro}=0$, we have two images sharing the same orientation, i.e., functional (mis)alignment concerning the reference matrix $\boldsymbol{M}$. In the same way, if $d_{Ro} >0$, the two matrices have different orientations in terms of column dimension.


Indeed, the definition of \emph{rotational-based} distance can be significantly simplified, thus simplifying both the computational calculation and the interpretation of distance itself. This is formalized in the following:
\begin{prop}
	The \emph{rotational-based} distance defined in Definition \ref{distance} can be expressed as:
	\begin{equation*}
		d_{Ro}(\hat{\boldsymbol{R}}_i, \hat{\boldsymbol{R}}_j) = 2m -2tr( \hat{\boldsymbol{R}}_i^\top \hat{\boldsymbol{R}}_j)
	\end{equation*}
	and takes values in $[0, 4m]$. The same result can be obtained using the \emph{residual-based} distance $d_{Re}$ defined in Equation \ref{eq:res} when $\boldsymbol{\hat{X}}_i, \boldsymbol{\hat{X}}_j \in \mathcal{O}(m)$.
\end{prop}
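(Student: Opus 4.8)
The plan is to reduce both assertions to the elementary identity $|| \boldsymbol{A} ||_{F}^2 = \text{tr}(\boldsymbol{A}^\top\boldsymbol{A})$ together with the defining property $\boldsymbol{R}^\top\boldsymbol{R} = \boldsymbol{I}_m$ of an orthogonal matrix. First I would expand
\begin{equation*}
	d_{Ro}(\hat{\boldsymbol{R}}_i,\hat{\boldsymbol{R}}_j) = \text{tr}\bigl((\hat{\boldsymbol{R}}_i-\hat{\boldsymbol{R}}_j)^\top(\hat{\boldsymbol{R}}_i-\hat{\boldsymbol{R}}_j)\bigr) = \text{tr}(\hat{\boldsymbol{R}}_i^\top\hat{\boldsymbol{R}}_i) - \text{tr}(\hat{\boldsymbol{R}}_i^\top\hat{\boldsymbol{R}}_j) - \text{tr}(\hat{\boldsymbol{R}}_j^\top\hat{\boldsymbol{R}}_i) + \text{tr}(\hat{\boldsymbol{R}}_j^\top\hat{\boldsymbol{R}}_j).
\end{equation*}
The first and last terms equal $\text{tr}(\boldsymbol{I}_m) = m$ by orthogonality, and since $\text{tr}(\boldsymbol{B}^\top)=\text{tr}(\boldsymbol{B})$ the two middle terms are both equal to $\text{tr}(\hat{\boldsymbol{R}}_i^\top\hat{\boldsymbol{R}}_j)$; collecting terms yields $d_{Ro}=2m-2\,\text{tr}(\hat{\boldsymbol{R}}_i^\top\hat{\boldsymbol{R}}_j)$, as claimed.

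For the range, I would observe that $\boldsymbol{Q}:=\hat{\boldsymbol{R}}_i^\top\hat{\boldsymbol{R}}_j$ is again orthogonal, so it suffices to show $\text{tr}(\boldsymbol{Q})\in[-m,m]$ for every $\boldsymbol{Q}\in\mathcal{O}(m)$. This holds because all singular values of $\boldsymbol{Q}$ equal $1$: by the Cauchy--Schwarz (or von Neumann trace) inequality $|\text{tr}(\boldsymbol{Q})| = |\text{tr}(\boldsymbol{Q}^\top\boldsymbol{I}_m)| \le \sum_{k=1}^m \sigma_k(\boldsymbol{Q})\,\sigma_k(\boldsymbol{I}_m) = m$; equivalently, diagonalizing $\boldsymbol{Q}$ over $\mathbb{C}$ its eigenvalues lie on the unit circle and occur in conjugate pairs together with possible $\pm 1$ entries, so the real trace has modulus at most $m$. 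Substituting $\text{tr}(\boldsymbol{Q})=m$ (attained at $\hat{\boldsymbol{R}}_i=\hat{\boldsymbol{R}}_j$) and $\text{tr}(\boldsymbol{Q})=-m$ (attained at $\boldsymbol{Q}=-\boldsymbol{I}_m\in\mathcal{O}(m)$) into $2m-2\,\text{tr}(\boldsymbol{Q})$ gives the endpoints $0$ and $4m$, so $d_{Ro}$ takes values in $[0,4m]$.

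Finally, for the residual-based statement I would note that when $\boldsymbol{\hat{X}}_i,\boldsymbol{\hat{X}}_j\in\mathcal{O}(m)$ the quantity $d_{Re}(\boldsymbol{\hat{X}}_i,\boldsymbol{\hat{X}}_j)=|| \boldsymbol{\hat{X}}_i-\boldsymbol{\hat{X}}_j ||_{F}^2$ has exactly the same algebraic form as $d_{Ro}$, so the identical expansion applies verbatim and gives $d_{Re}=2m-2\,\text{tr}(\boldsymbol{\hat{X}}_i^\top\boldsymbol{\hat{X}}_j)$ with the same range $[0,4m]$. I do not anticipate a genuine obstacle here: the computation is routine linear algebra, and the only point deserving a little care is the trace bound $|\text{tr}(\boldsymbol{Q})|\le m$ on $\mathcal{O}(m)$ together with the verification that both endpoints are actually attained, which is what pins the interval down as exactly $[0,4m]$ rather than a proper subset.
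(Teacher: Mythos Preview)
Your proof is correct and follows essentially the same route as the paper: the same trace expansion of the squared Frobenius norm using $\hat{\boldsymbol{R}}^\top\hat{\boldsymbol{R}}=\boldsymbol{I}_m$, and the same eigenvalue argument (eigenvalues of an orthogonal matrix lie on the unit circle) to bound $\text{tr}(\hat{\boldsymbol{R}}_i^\top\hat{\boldsymbol{R}}_j)\in[-m,m]$. You are a bit more explicit than the paper in checking that both endpoints are attained and in spelling out the residual-based case, but the underlying argument is the same.
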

\begin{proof}
	\begin{align*}
		d(\hat{\boldsymbol{R}}_i, \hat{\boldsymbol{R}}_j) &= ||\hat{\boldsymbol{R}}_i - \hat{\boldsymbol{R}}_j||_{F}^2 = tr[(\hat{\boldsymbol{R}}_i  - \hat{\boldsymbol{R}}_j)^\top (\hat{\boldsymbol{R}}_i  - \hat{\boldsymbol{R}}_j)] \\
		&= tr(\hat{\boldsymbol{R}}_i^\top \hat{\boldsymbol{R}}_i  -  \hat{\boldsymbol{R}}_i^\top \hat{\boldsymbol{R}}_j - \hat{\boldsymbol{R}}_j^\top \hat{\boldsymbol{R}}_i  + \hat{\boldsymbol{R}}_j^\top \hat{\boldsymbol{R}}_j) = 2m -2tr( \hat{\boldsymbol{R}}_i^\top \hat{\boldsymbol{R}}_j).
	\end{align*}
	Considering the \emph{rotational}-based distance, the trace of the product between the two orthogonal matrices $\hat{\boldsymbol{R}}_i$ and $\hat{\boldsymbol{R}}_j$ can take only values between $m$ and $-m$ since the eigenvalues of an orthogonal matrix lie on the unit circle (i.e., they have module equal to $1$). If $tr( \hat{\boldsymbol{R}}_i^\top \hat{\boldsymbol{R}}_j) = m$, this means that $ \hat{\boldsymbol{R}}_i =  \hat{\boldsymbol{R}}_j$ since the only way for an orthogonal matrix to have all the eigenvalues equal to $1$ is being an identity matrix. In the same way, if the trace equals $-m$, this means that $ \hat{\boldsymbol{R}}_i^\top \hat{\boldsymbol{R}}_j = - I_m$, i.e., $ \hat{\boldsymbol{R}}_i =  - \hat{\boldsymbol{R}}_j$.
	
\end{proof}


The \emph{residual-based} distance and the \emph{rotational-based} distance are then computed for each pair of aligned matrices $\{\hat{\boldsymbol{X}}_i \}_{i = 1, \dots, N}$ and for each pair of orthogonal matrices $\{\hat{\boldsymbol{R}}_i \}_{i = 1, \dots, N}$, resulting in a the global distance matrix $D \in \mathbb{R}^{N \times N}$. Information from different matrices with large dimensions can be summarized through the proposed distance matrix, which will turn out to be of lower dimension, i.e., of dimension $N \times N$. These distance matrices can be handy in various applications, particularly when handling big data. In the literature, various statistical methods are based on the distance matrix. However, they generally focus on analyzing the distances of several covariates described by a single matrix or on analyzing multiple distance matrices (e.g., the INDSCAL method proposed by \cite{carroll1998multidimensional}). In contrast, the distance matrix proposed in this manuscript directly summarizes several large matrices' similarity and dissimilarity characteristics.

This matrix $D$ can then be used inside a dissimilarity-based algorithm such as the multidimensional scaling technique \citep{carroll1998multidimensional}, hierarchical clustering \citep{murtagh2012algorithms} and t-distributed stochastic neighbor embedding (t-SNE) \citep{van2008visualizing}. 

\section{Application}\label{application}

We analyze $24$ subjects passively looking at food and no-food (office utensils) images collected by \cite{smeets2013allured}. The food/no-food images are proposed to the participants alternately (24 seconds of food images and 24 seconds of no-food images) with a rest block of 12 seconds on average showing a crosshair. The food stimulus is a collection of attractive foods to capture brain activations concerning self-regulation in response to viewing images of tempting (i.e., palatable high-caloric) food \citep{smeets2013allured}.


The dataset was preprocessed using the Functional MRI of the Brain Software Library (FSL) \citep{jenkinson2012fsl} following a standard processing pipeline. The registration step to standard space images was computed using FLIRT \citep{jenkinson2001global}, the motion correction using MCFLIRT \citep{jenkinson2002improved}, the non-brain removal using BET \citep{jenkinson2002improved}, and spatial smoothing using a Gaussian Kernel FWHM ($6$mm). Finally, the intensity normalization of the entire four-dimensional dataset was computed by a single multiplicative factor, and the high-pass temporal filtering (Gaussian-weighted least-squares straight line fitting, with sigma=$64.0$s) was applied. The raw dataset is available at \url{https://openneuro.org/datasets/ds000157/versions/00001}, while the preprocessed one is available in the \texttt{R} package \texttt{rotoDistance} (\url{https://github.com/
	angeella/rotoDistance}). For details about the experimental design and data acquisition, please see \cite{smeets2013allured}.

We analyze the right calcarine sulcus composed of $237$ voxels being an area involved in processing visual information and related to regions involved in the regulation of food intake \citep{smeets2013allured}. However, the whole brain can be analyzed instead of only a region of interest (e.g., right calcarine sulcus). In fact, the distances proposed permit to resume complex high-dimensional data, like the fMRI ones, that are generally composed by $N$ matrices having dimensions $300 \times 200,000$ (i.e., $300$ time points and $200,000$ voxels), through a matrix of low $N\times N$ dimensions.

The ProMises model is fitted on preprocessed data, and aligned images are then used to compute the distance matrix $\boldsymbol{D}_{Re} \in \mathbb{R}^{24 \times 24}$ (i.e., \emph{residual}-based distances), while 
the corresponding optimal rotation matrices are used to compute $\boldsymbol{D}_{Ro} \in \mathbb{R}^{24 \times 24}$ (i.e., \emph{rotational}-based distances) as described in Section \ref{rotations}. 

These two types of \emph{Procrustes}-based distance capture different information, i.e., the between-subjects dissimilarity in terms of brain activations before and after functional alignment. Figure \ref{fig:scatterplot} shows the distances $\boldsymbol{D}_{Re}$ and $\boldsymbol{D}_{Ro}$ for each pair of subjects. The correlation between them is very low, i.e., $\approx 0.03$, as we can note from Figure \ref{fig:scatterplot}, i.e.,  $\boldsymbol{D}_{Re}$ and $\boldsymbol{D}_{Ro}$ return two distinct insights regarding the between-subjects dissimilarity brain activations.

\begin{figure}
	\centering
	\includegraphics[width=.8\textwidth]{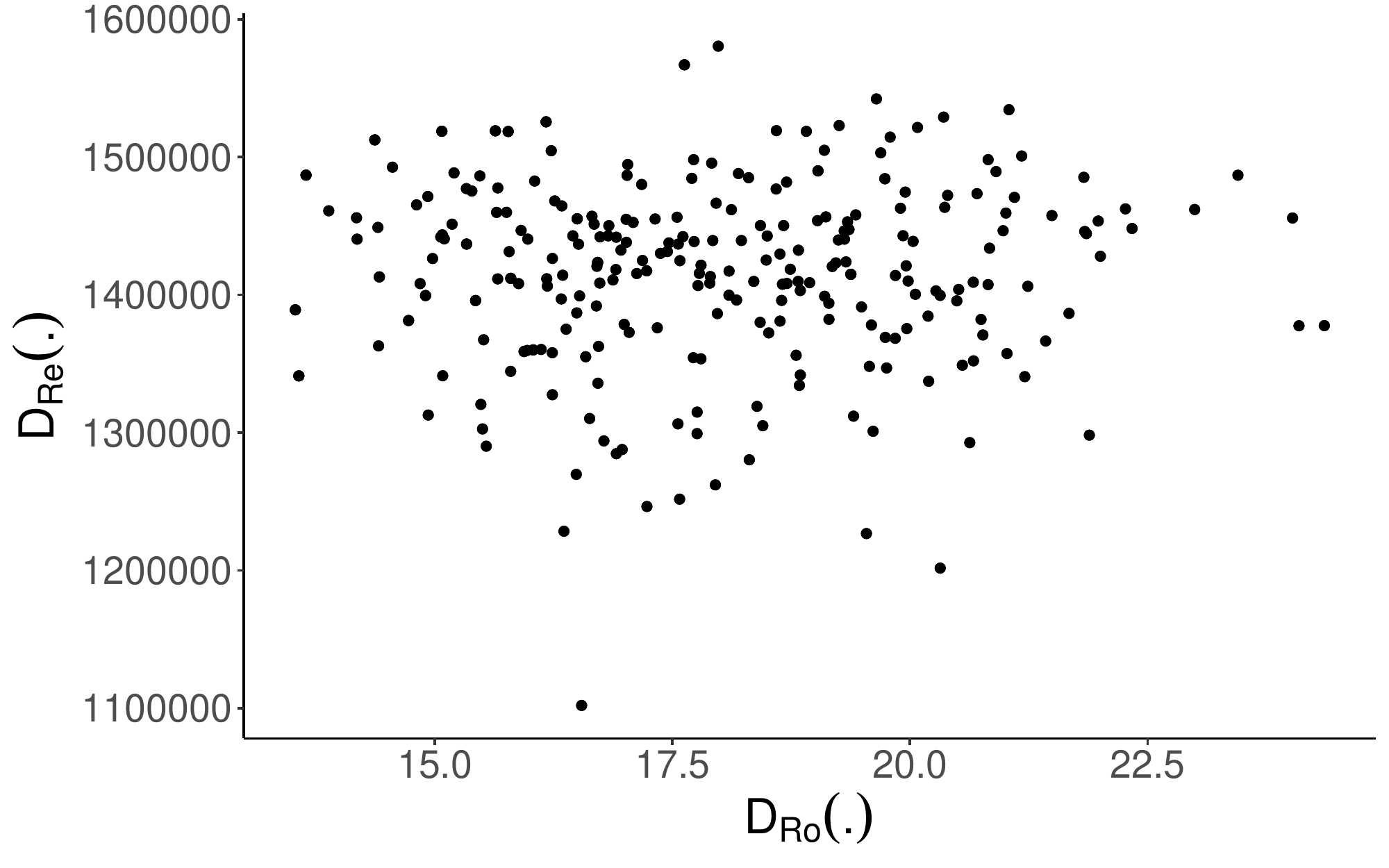}
	\caption{Scatterplot between $\boldsymbol{D}_{Re}$ and $\boldsymbol{D}_{Ro}$.}
	\label{fig:scatterplot}
\end{figure}

The multidimensional scaling (MDS) technique \citep{carroll1998multidimensional} is now applied considering $\boldsymbol{D}_{Ro}$ as distance matrix. A second analysis is run using $\boldsymbol{D}_{Re}$. For comparison porposes, we also computed the Euclidean distances between images that are not functionally aligned. We denote the corresponding distance matrix as $D_{raw}(\boldsymbol{X}_i, \boldsymbol{X}_j) = || X_i - X_j||_{F}^2$. We used the \texttt{smacof} \texttt{R} package \citep{de2009multidimensional} for applying the multidimensional scaling technique. We decided to apply the spline MDS (monotone spline transformation) with as much flexibility as possible. See \cite{de2009multidimensional} for more details. 

Furthermore, we have some covariates for each subject to analyze, briefly described in Table \ref{tab:cov} together with age, body mass index (BMI), and other information. We then analyze these covariates with the matrix of fitted configurations computed by the multidimensional scaling approach. Please see \cite{smeets2013allured} for more details.

\begin{table}[]
	\centering
	\begin{tabular}{l|l}
		\textbf{COVARIATE} & \textbf{DESCRIPTION} \\
		\toprule
		Diet importance    &  Importance to dieting on a $5$ point scale\\
		\rowcolor[gray]{.9}
		Diet success    &  Success in the diet on a $5$ point scale\\
		Appetite pre-experiment   & Appetite before the scan on a $5$ point scale with $3$ items\\
		&(Cronbach’s $\alpha= 0.84$)\\
		\rowcolor[gray]{.9}
		
		Appetite post-experiment     &  Appetite after the scan on a $5$ point scale with $3$ items \\
		\rowcolor[gray]{.9}
		
		&(Cronbach’s $\alpha= 0.91$)\\
		Cycle phase     & Date of their last menstrual period \\
		&(follicular and ovulation, luteal and menstrual phases)
	\end{tabular}
	\caption{Description of the covariates concerning the dataset from \citep{smeets2013allured}}
	\label{tab:cov}
\end{table}

Focusing firstly on the distance matrix $D_{Ro}$, Figure \ref{fig:stress} shows the stress value considering several numbers of dimensions $K = \{1, \dots, 20\}$ into the multidimensional scaling method. We evaluated that $K = 11$ is a good value corresponding to stress $\approx 0.05$.

\begin{figure}
	\centering
	\includegraphics[width=.8\textwidth]{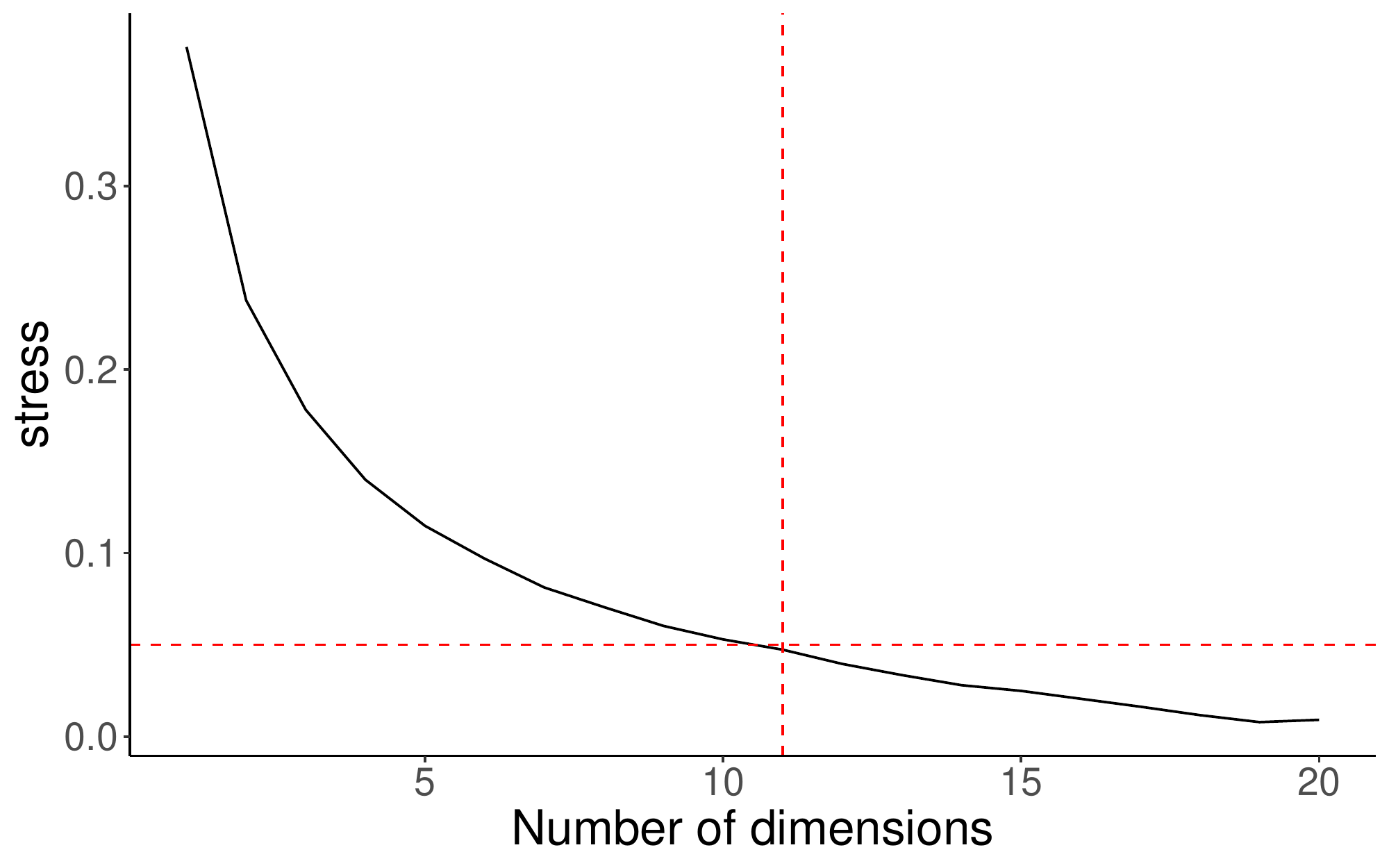}
	\caption{$D_{Ro}$ analysis: Stress values considering several numbers of dimensions in the multidimensional scaling method. The dotted red lines refer to stress equal to $0.05$, and a number of dimensions equal to $11$.}
	\label{fig:stress}
\end{figure}

We then performed simple generalized linear regressions with the covariates as dependent variables and the $11$ configurations fitted by MDS as explanatory variables. We found a significant relationship between the covariate diet success and $6$ dimension ($t = -4.171$, $p = 0.0065$) in accordance with the results found by \cite{smeets2013allured}.
The p-value reported was adjusted for multiple testing using the Bonferroni method \citep{goeman2014multiple}. Therefore, Figure \ref{fig:distance} shows the $1$ and $6$ fitted configurations along with the main covariate (diet success) and cycle phase covariate.

\begin{figure}
	\centering
	\includegraphics[width=\textwidth]{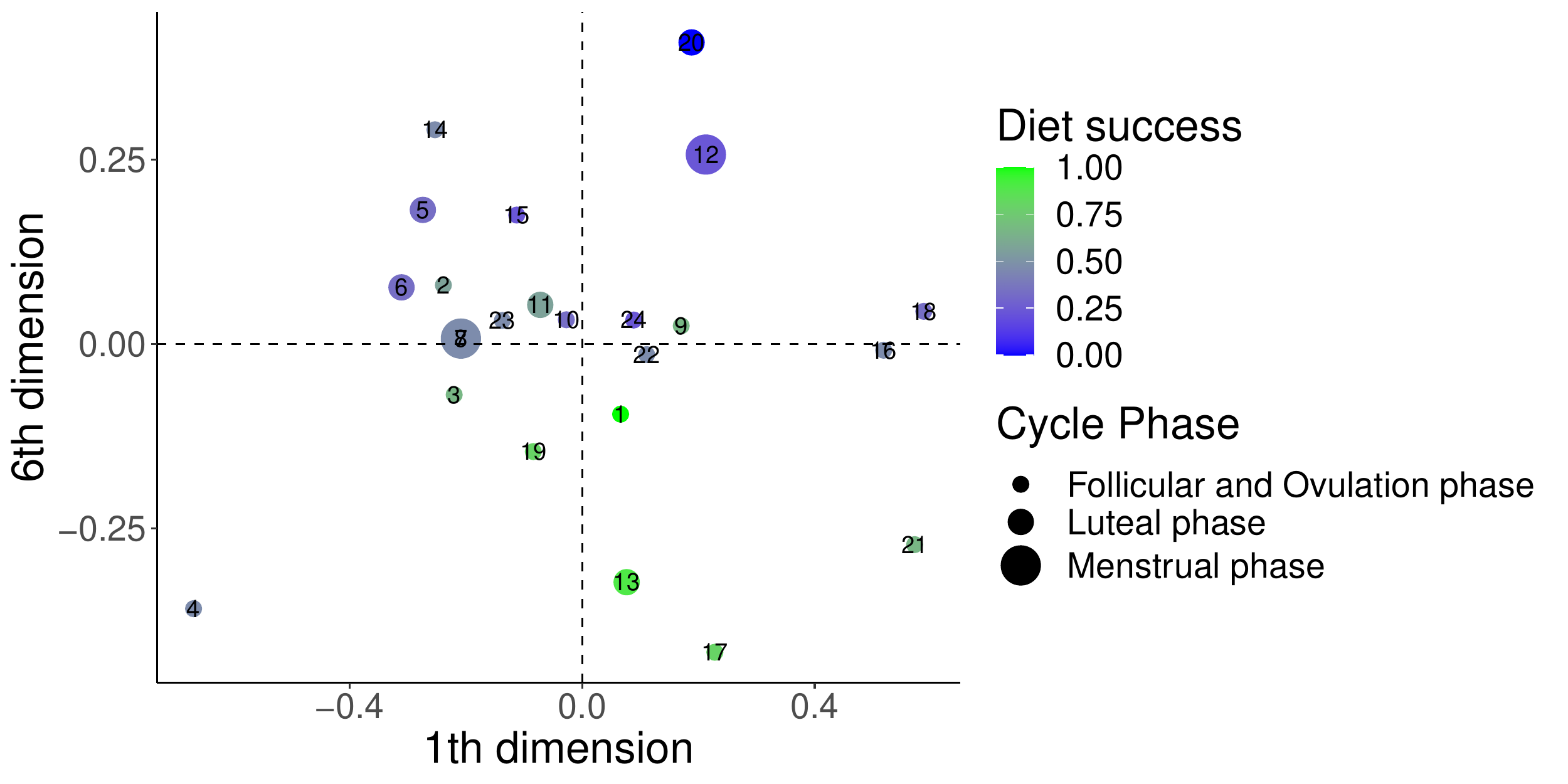}
	\caption{$D_{Ro}$ analysis: The $x$-axis represents the $1$th fitted configuration computed by the multidimensional scaling approach, while the $y$-axis shows the $6$th fitted configuration. The color gradient describes the diet success covariate (scaled), while the size of the points specifies the phase of the cycle.}
	\label{fig:distance}
\end{figure}

At first look, we can note how the $y$ axis represents the success in a diet, where negative values correspond to low success and positive values high success in a diet. We can also note, for example, that subjects $16$ and $18$ share the same functional misalignment with similar diet success values and the same cycle phase.

However, if we instead apply multidimensional scaling on the matrix of \emph{residual-based} distances, we do not find patterns as clear as those found using \emph{rotation-based} distances, as can be seen from Figure \ref{fig:distance_res}. In this case, we automatically set the number of dimensions equal to $11$ (which is equivalent to a stress of $0.05$). The generalized linear regressions did not show any significant features, unlike the first analysis based on the distances of the rotations.

\begin{figure}
	\centering
	\includegraphics[width=\textwidth]{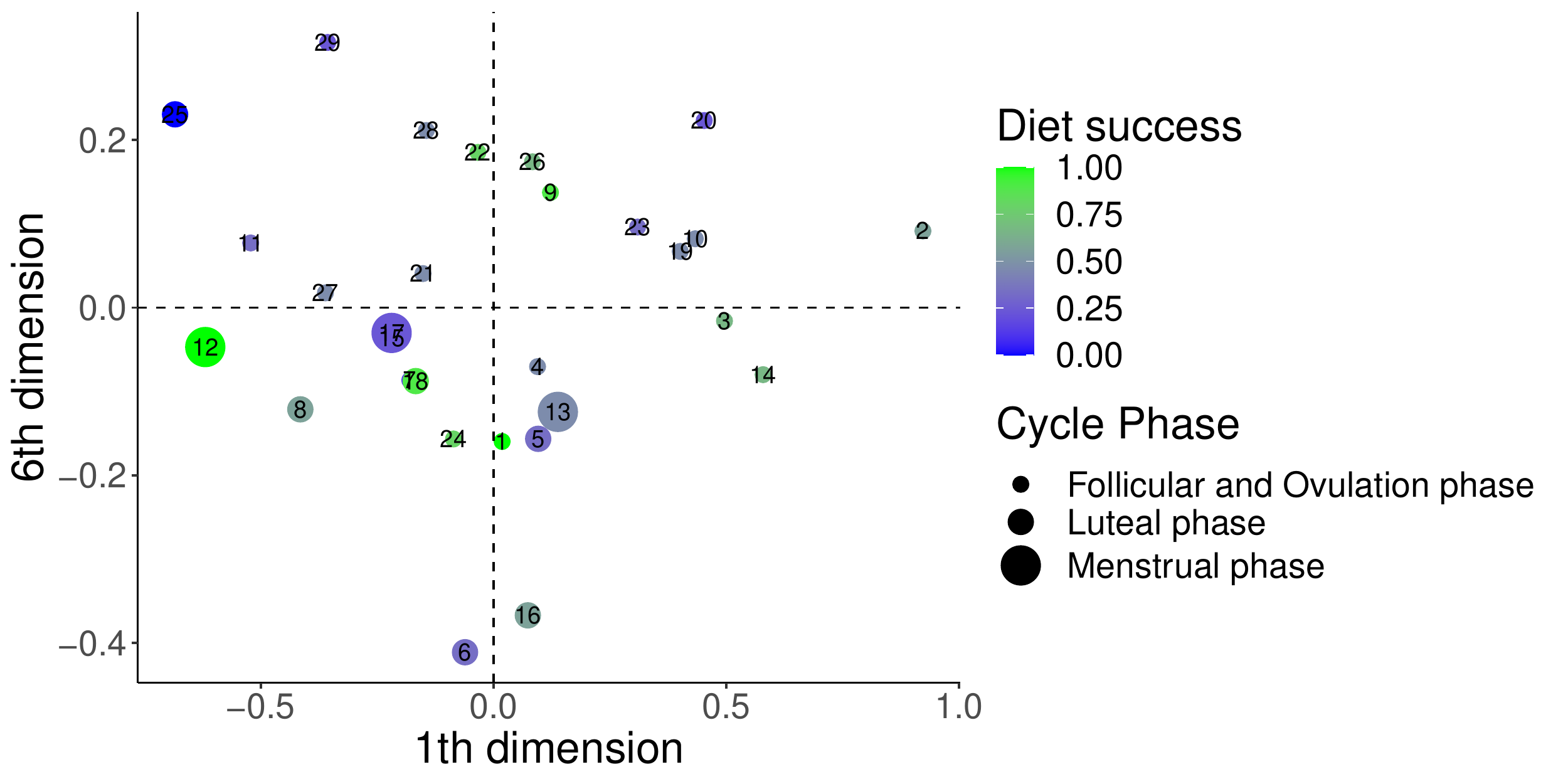}
	\caption{$D_{Re}$ analysis: The $x$-axis represents the $1$th fitted configuration computed by the multidimensional scaling approach, while the $y$-axis shows the $6$th fitted configuration. The color gradient describes the appetite post covariate (scaled) while the size of the points specifies the phase of the cycle. 
}
\label{fig:distance_res}
\end{figure}

Finally, we would have the same situation found using $D_{Re}$ (or even worse) if we used as distance matrix $D_{raw}$ (i.e., using images not functionally aligned). The stress value equals $0.02$ considering $11$ dimensions, and no significant dimensions were found from the generalized linear regressions. Figure \ref{fig:distance_raw} represents the multidimensional scaling results using $D_{raw}$. The two dimensions do not capture the subject-level features analyzed. 

\begin{figure}
\centering
\includegraphics[width=\textwidth]{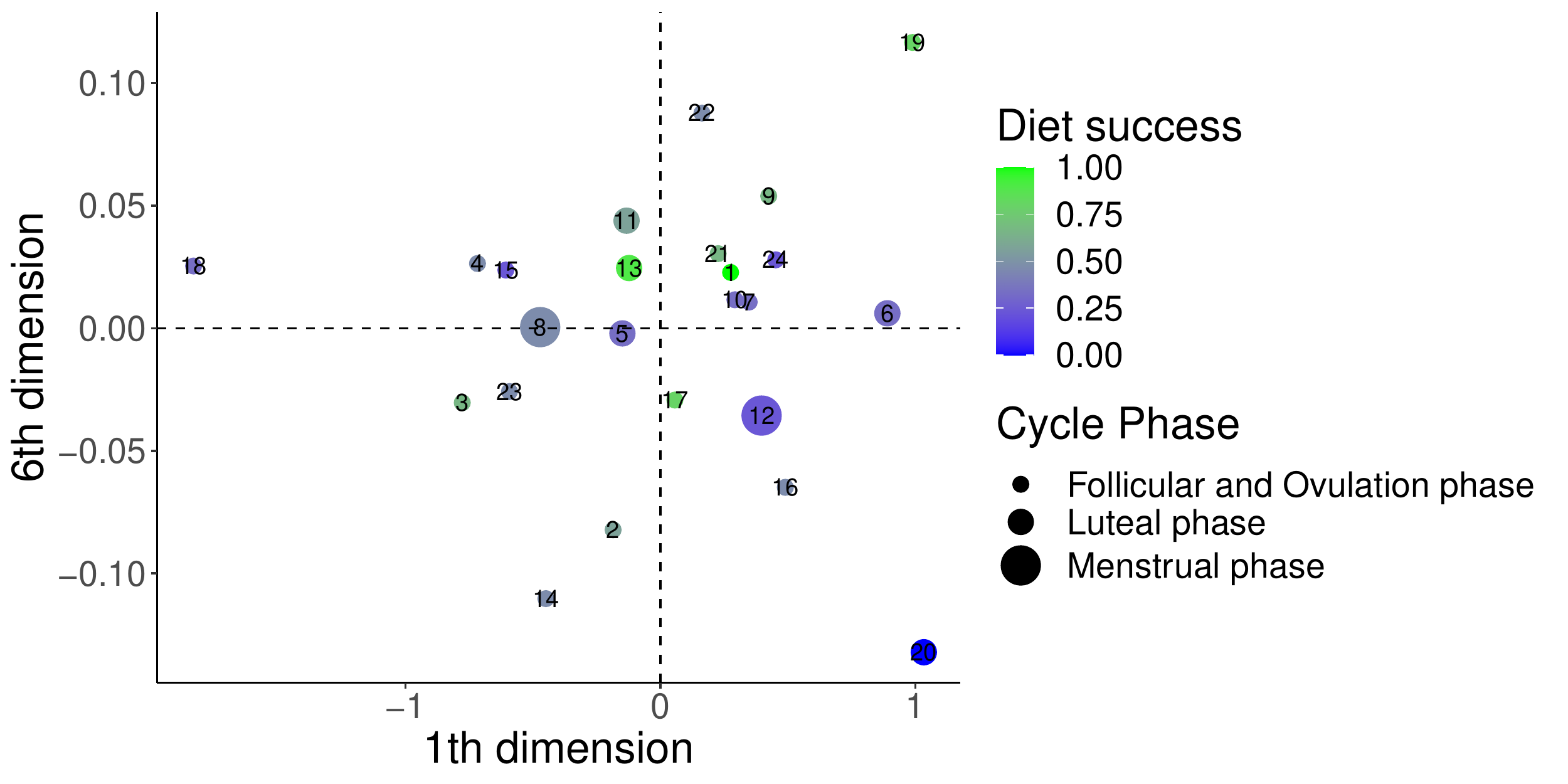}
\caption{$D_{raw}$ analysis: The $x$-axis represents the $1$th fitted configuration computed by the multidimensional scaling approach, while the $y$-axis shows the $6$th fitted configuration. The color gradient describes the appetite post covariate (scaled), while the size of the points specifies the phase of the cycle.}
\label{fig:distance_raw}
\end{figure}

To sum up, the \emph{rotational}-based distance $D_{Ro}$ allows capturing the functional variability in neural response in terms of \emph{rotational effort}. The subjects' neural activation then differs in terms of orientation. However, these differences are lessened after and before the application of functional alignment, i.e., considering the \emph{residual}-based distance $D_{Re}$ and the \emph{raw}-based distance $D_{raw}$ when some subject-specific covariates are analyzed in the same time. 

\section{Conclusions}
In this manuscript, we proposed \emph{Procrustes}-based distances based on the aligned images and orthogonal transformations estimated by \emph{Procrustes}-based methods. These distances permit the exploration of the dissimilarity between matrices from two independent points of view. The \emph{residual}-based distance expresses the dissimilarity in terms of functional columns net to rotations, i.e., eliminating the orientation component. Instead, the \emph{rotation}-based distance describes the dissimilarity in terms of functional (mis)alignment of the matrices' columns, i.e., how the matrices have similar column orientations. The method is helpful when the research aim is to analyze matrices expressed in an arbitrary coordinate system. In addition, the proposed distances can also be advantageous when the focus is exploring the distances between big data matrices, e.g., fMRI application. In this framework, each subject is represented by a vast matrix with approximately $300 \times 200,000$ dimensions. The \emph{Procrustes}-based distances permit the exploration of these matrices in a space with dimensions equal to the number of matrices/subjects analyzed. In the fMRI application, we found that these metrics result in reliable measures of individual differences. In fact, the \emph{Procrustes}-based functional alignments permit reducing confounds from topographic idiosyncrasies and capturing variation around shared functional and anatomical responses across individuals. The distances proposed in this manuscript allowed to find groups of individuals sharing patterns of neural brain activation. In conclusion, the \emph{Procrustes}-based distances thus add valuable exploratory and visualization tools to the world of Procrustes' methods.

\section*{Acknowledgment}
Angela Andreella gratefully acknowledges funding from the grant BIRD2020 /SCAR ASSEGNIBIRD2020\_01 of the University of Padova, Italy, and PON 2014-2020/DM 1062 of the Ca’ Foscari University of Venice, Italy. Some of the computational analyses done in this manuscript were carried out using the University of Padova Strategic Research Infrastructure Grant 2017: "CAPRI: Calcolo ad Alte Prestazioni per la Ricerca e l’Innovazione", \url{http://capri.dei.unipd.it}. 

\section*{Author contributions}
\textbf{Angela Andreella}: conceptualization, software, data curation, formal analysis, investigation, and writing of the original draft, review \& editing. \textbf{Riccardo De Santis}: Conceptualization, methodology, writing the original draft, review \& editing. \textbf{Anna Vesely}: Conceptualization, methodology, writing the original draft, review \& editing. \textbf{Livio Finos}: Conceptualization, methodology, writing the
original draft, and supervision.

\section*{Declaration of Competing Interest}
The authors declare no competing interests.

\newpage

\bibliographystyle{apalike}
\bibliography{bibliography.bib} 

\end{document}